\documentclass[12pt]{article}

\title
	{
	Functional models and self-modeling property of minimal Dirac operators on the half-line
	}
	
\author
	{
	M.\,I. Belishev
		\thanks{St.\,Petersburg Department of Steklov Mathematical Institute of the Russian Academy of Sciences, Fontanka 27, St.\,Petersburg, Russia, 191023; belishev@pdmi.ras.ru}
	and 
	S.\,A. Simonov
		\thanks{St.\,Petersburg Department of Steklov Mathematical Institute of Russian Academy of Sciences, Fontanka 27, St.\,Petersburg, Russia, 191023; Alferov Academic University of the Russian Academy of Sciences, Khlopina 8/3, St.\,Petersburg, Russia, 194021; Institute of Mathematics, ITMO University, Kronverksky Prospekt, 49, St.\,Petersburg, Russia, 197101; sergey.a.simonov@gmail.com}
	}

\usepackage{amssymb,amsmath,amsthm}
\usepackage{graphicx,color}
\usepackage{array, amsfonts, mathrsfs}
\usepackage{epstopdf}
\usepackage[colorlinks=true]{hyperref}
\usepackage{float}
\usepackage{enumitem}
\usepackage[dvipsnames]{xcolor}

\newtheorem{Lemma}{Lemma}
\newtheorem{Theorem}{Theorem}

\newtheorem*{Remark*}{Remark}
\newtheorem{Corollary}{Corollary}

\newtheorem*{Definition*}{Definition}
\newtheorem{Definition}{Definition}

\newcommand{\m}{\mathscr}

\newcommand{\f}{\mathfrak}
\newcommand{\Dom}{\text{Dom\,}}
\newcommand{\op}{\operatorname}
\newcommand{\bb}{\mathbb}

\date{}

\begin{document}
\maketitle

\begin{abstract}
We prove that minimal Dirac operators on the half-line are self-modeling, which means that such an operator is determined by its arbitrary unitary copy uniquely up to a transformation (shape equivalence) which changes its potential by a constant factor of modulus one. This result is obtained using the wave functional model of the minimal matrix Schr\"odinger operator on the half-line.
\end{abstract}

\subsection*{About the paper}

$\bullet$\quad
Let $(X,\mu)$ be a measurable space and $\m E$ be a Hilbert space, $\m H=L_2(X,\mu;\m E)$ be the Hilbert space of square summable $\m E$-valued functions on $X$, $\f U(\m E)$ be the group of unitary operators in $\m E$. Every $\theta\in\f U(\m E)$ defines operator $\hat\theta\in\f U(\m H)$ acting as $(\hat\theta y)(x)=\theta(y(x))$, $y\in\m H$. Let $L$ be an operator in $\m H$ with the domain $\op{Dom} L$. For $\theta\in\f U(\m E)$ we denote $L_{\theta}:=\hat\theta L\hat\theta^*$ acting on $\op{Dom} L_{\theta}=\hat\theta\op{Dom} L$.

Let $\f G\subset\f U(\m E)$ be a subgroup. Operator $L'$ in $\m H$ is called {\it shape equivalent} to $L$ if there exists $\theta\in\f G$ such that $L'=L_{\theta}$ and $\op{Dom}L'=\op{Dom}L$.

An operator $\dot L$ in an (abstract) Hilbert space $\dot{\m H}$ which is unitarily equivalent to $L$ is called its (unitary) {\it copy}, whereas $L$ is called (functional) {\it model} of $\dot{L}$. 

Operator $L$ is called {\it self-modeling} if it is determined up to shape equivalence by any of its unitary copies.
In applications this means that some canonical procedure exists which enables one to recover $L$ or, more precisely, to recover the shape equivalence class $\{L_{\theta}\mid\theta\in\f G\}$, from the copy $\dot L$. In inverse problems of mathematical physics various kinds of inverse data, such as Titchmarsh--Weyl function, characteristic function, response operator, spectral function or other spectral data, provide a unitary copy $\dot{L}$ of the operator $L$ which is to be determined. To determine it one can extract the copy $\dot{L}$ from the data and then solve the {\it abstract inverse problem} by constructing a functional model of the copy. The {\it boundary control method} provides such a procedure in many cases \cite{Belishev1997,Belishev2007,Belishev2017,Belishev2013,BelishevSimonov2020,BelishevSimonov2026JOT}.

Self-modeling property is a property of a class of operators on a functional space with a fixed group $\f G$ of unitary transforms on the space of values $\m E$. The group $\f G$ is chosen in such a way that corresponding transformations preserve the class of operators. For example, for Schr\"odinger and for Dirac operators these groups are different, they preserve the main differential terms.

In the present paper we establish self-modeling property of minimal Schr\"o\-din\-ger and Dirac operators on the half-line. For matrix Schr\"odinger operators self-modeling follows directly from our earlier results \cite{BelishevSimonov2026JOT,BelishevSimonov2024} where the wave functional model was constructed. For Dirac operators these results are also utilized to show self-modeling, however, no functional model of Dirac operator is used, because known functional models of Dirac operators \cite{BelishevSimonov2022} are not suitable for solving this abstract inverse problem. This paper is a detailed version of the short note \cite{BelishevSimonov2026ShortNote}.

This work was supported by the Ministry of Science and Higher Education of the Russian Federation (agreement 075-15-2025-344 dated 29/04/2025 for Saint Petersburg Leonard Euler International Mathematical Institute at PDMI RAS).

\subsection*{Minimal Dirac operators on the half-line}

Let $p \in W_{\infty, \text {loc}}^{1}\left(\mathbb{R}_{+} ; \mathbb{C}\right)$ and consider the minimal Dirac operator
$D_{\text {min}}(p)$ which acts in the Hilbert space $\m H=L_{2}\left(\mathbb{R}_{+} ; \mathbb{C}^{2}\right)$ by the differential expression
\begin{equation*}
\left(\m  D y\right)(x):=i \sigma_{3} y^{\prime}(x)+P(x) y(x),
\end{equation*}
$D_{\text {min}}(p) y=\m D y$, where $\sigma_{3}=\begin{pmatrix} 1 & 0 \\ 0 & -1\end{pmatrix}$ and $P(x)=
\begin{pmatrix} 0 & p(x) \\ \overline{p(x)} & 0\end{pmatrix}$, on the domain
\begin{equation*}
\text {Dom\,} D_{\text{min}}(p)=\left\{y \in \m {H} \mid y \in H_{\text {loc}}^{1}\left(\mathbb{R}_+; \mathbb{C}^{2}\right), \m  D y \in \m {H}, y(0)=0\right\}.
\end{equation*}
It is a symmetric completely non-self-adjoint operator with deficiency indices $n_{+}\left(D_{\min}(p)\right)=n_{-}\left(D_{\min}(p)\right)=1$. Its adjoint is the maximal Dirac operator $D_{\text{max}}(p)=D_{\text {min}}^*(p)$, which acts by the same rule $\m D$ on the domain
\begin{equation*}
\text { Dom } D_{\text{max}}(p)=\left\{y \in \m {H} \mid y \in H_{\text{loc}}^{1}\left(\mathbb{R}_{+}; \mathbb{C}^{2}\right), \m D y \in \m {H}\right\}.
\end{equation*}
We consider as well a self-adjoint extension $D(p)$ of $D_{\text{min}}(p)$ which is a restriction of $D_{\rm max}(p)$ to the domain
$$
\Dom D(p)=\left\{y \in \Dom D_{\text {max}}(p) \mid y_{1}(0)+y_{2}(0)=0\right\}.
$$

\noindent$\bullet$\quad
Let $\f G_D:=\left\{\eta_a=\begin{pmatrix} 1 & 0 \\ 0 & e^{-ia}
\end{pmatrix}\bigg| a\in\bb R\right\}\subset\f U(\bb C^2)$. Note that the equality $\theta\sigma_3\theta^*=\sigma_3$ is possible for $\theta\in\f U(\bb C^2)$, only if $\theta=e^{ib}\eta_a$ with $a,b\in\bb R$, in which case
$$
(D_{\text{min}}(p))_{\theta}y=i\sigma_3y'+\begin{pmatrix} 0 & pe^{ia} \\ \overline pe^{-ia} & 0
\end{pmatrix}y
$$
(independently of $b$).

\begin{Definition}\label{def Dirac eq}
We call two minimal Dirac operators $D_{\rm min}(p_1)$ and $D_{\rm min}(p_2)$ \textbf{shape equivalent}, if $D_{\rm min}(p_1)=\hat\eta D_{\rm min}(p_2)\hat\eta^*$ with some $\eta\in\f G_D$. The potentials of two shape equivalent Dirac operators $p_1$ and $p_2$ are related by the equality $p_1(x)=p_2(x)e^{ia}$ with constant $a\in\bb R$. In this context we call $p_1$ and $p_2$ \textbf{equivalent}.
\end{Definition}

Note that $D_{\text{min}}(\overline{p})$ is unitarily equivalent to $-D_{\text{min}}(p)$, since for $\theta=\begin{pmatrix} 0 & -1 \\ 1 & 0 \end{pmatrix}$ one has $\theta\sigma_3\theta^*=-\sigma_3$ and $\theta P\theta^*=-\begin{pmatrix} 0 & \overline{p} \\ p & 0 \end{pmatrix}$, so $\hat\theta D_{\rm min}(p)\hat\theta^*=-D_{\rm min}(\overline p)$.

\begin{Definition}\label{def exceptional case}
Let us call \textbf{exceptional case} a situation that $D_{\rm{min}}(p)$ is unitarily equivalent to $-D_{\rm{min}}(p)$.
\end{Definition}

This coincides with the condition of $D_{\rm min}(p)$ being unitarily equivalent to $D_{\rm min}(\overline{p})$ and happens, e.\,g., if $p(x)$ has constant argument. In what follows it is important that $D_{\text{min}}^2(\overline{p})$ is unitarily equivalent to $D_{\text{min}}^2(p)$. The main result of the paper is the next theorem.

\begin{Theorem}\label{thm Dirac}
Minimal Dirac operators $D_{\rm min}(p)$ with potentials $p\in W^1_{\infty,\rm{loc}}(\bb R_+;\bb C)$ in the non-exceptional case are self-modeling.
\end{Theorem}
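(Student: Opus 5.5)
Given a unitary copy $\dot D$ of $D_{\rm min}(p)$, I will pass to its square. The operator $\dot D^2$ is a unitary copy of $D_{\rm min}^2(p)$. A direct computation shows that $D_{\rm min}^2(p)$ is a minimal matrix Schr\"odinger-type operator on the half-line. Indeed, $\m D^2 y=-y''+i(\sigma_3P'+\ldots)$; since $\sigma_3P+P\sigma_3=0$, we get
$$\m D^2y=-y''+i\sigma_3P'y+|p|^2y .$$
The domain is $\{y\in H^2_{\rm loc}: y(0)=0,\ (\m D y)(0)=0\}$, i.e.\ $y(0)=y'(0)=0$. So it is the minimal Schr\"odinger operator with matrix potential
$$Q=|p|^2I+i\sigma_3P'=\begin{pmatrix}|p|^2& ip'\\ -i\overline{p}'&|p|^2\end{pmatrix}.$$
The regularity $p\in W^1_{\infty,\rm loc}$ gives $Q\in L_{\infty,\rm loc}$. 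By the self-modeling of minimal matrix Schr\"odinger operators (the wave functional model of \cite{BelishevSimonov2026JOT,BelishevSimonov2024}, with group $\f U(\bb C^2)$ of constant unitary rotations), $\dot D^2$ canonically determines $Q$ up to $Q\mapsto\theta Q\theta^*$, $\theta\in\f U(\bb C^2)$. Concretely, the model gives a unitary $\Phi:\dot{\m H}\to\m H$ with $\Phi\dot D^2\Phi^*=\hat\theta D^2_{\rm min}(p)\hat\theta^*$ for some unknown constant $\theta$.

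Next I transport $\dot D$ itself. The operator $D':=\Phi\dot D\Phi^*$ is a symmetric operator with $(D')^2=\hat\theta D_{\rm min}^2(p)\hat\theta^*$. Let $U$ be the unitary intertwining $\dot D$ with $D_{\rm min}(p)$. Then $W:=\hat\theta^*\Phi U^*$ is a unitary operator in $\m H$ commuting with $D_{\rm min}^2(p)$. The key step is to show that unitaries commuting with the minimal Schr\"odinger operator $D_{\rm min}^2(p)$ are of the form $\hat\omega$ with constant $\omega$ in the commutant of $Q$. The wave model is built from the (time-domain) wave dynamics and the nested subspaces $\m H^T$ of waves supported in $[0,T]$, which are intrinsically defined by the operator (via $\op{Dom}$ of powers / the controllability structure). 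A commuting unitary therefore preserves all $\m H^T$, hence acts locally, i.e.\ as multiplication by a matrix function $\omega(x)$. Commuting with $-\partial_x^2+Q$ then forces $\omega'=0$, and $\omega Q=Q\omega$. I expect this step to be the main obstacle. Concretely, I must verify that the uniqueness part of the Schr\"odinger self-modeling result is formulated strongly enough to identify commuting unitaries, and not only shape classes; if it is not, I will argue directly through the invariance of $\m H^T$ under $W$.

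Granting this, we have $D'=\hat\vartheta D_{\rm min}(p)\hat\vartheta^*$ with $\vartheta=\theta\omega$ constant. Now $D'$ is first order, and the leading coefficient $\vartheta\sigma_3\vartheta^*$ can be read off from $D'$ (apply it to smooth compactly supported functions). Since $\vartheta$ is a constant unitary with $(\vartheta\sigma_3\vartheta^*)^2=I$, it is a unitary conjugate of $\sigma_3$. I then choose, canonically up to $\f G_D$ and scalars, a constant unitary $\kappa$ with $\kappa^*\vartheta\sigma_3\vartheta^*\kappa=\sigma_3$, i.e.\ $\kappa^*\vartheta=e^{ib}\eta_a$, so $\hat\kappa^*D'\hat\kappa$ equals $D_{\rm min}(pe^{ia})$ by the computation above. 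However, the operator $\Phi$ is determined only up to $\hat\omega$'s with $\omega$ commuting with $Q$ (and the Schr\"odinger reconstruction is only up to $\f U(\bb C^2)$). So I must also account for the possibility that the reconstructed first-order operator equals $-D_{\rm min}(\overline p)$ up to a shape transformation, arising from $\theta$ with $\theta\sigma_3\theta^*=-\sigma_3$. This ambiguity occurs exactly because $D^2_{\rm min}(\overline p)\cong D^2_{\rm min}(p)$. It is resolved by the sign normalization: if the leading coefficient is conjugate to $-\sigma_3$ rather than $\sigma_3$, the candidate is $D_{\rm min}(\overline p e^{ia})$, which is not unitarily equivalent to $D_{\rm min}(p)$ in the non-exceptional case. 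Hence the requirement that the result be unitarily equivalent to $\dot D$ (not to $-\dot D$) selects the correct class. This recovers the class $\{D_{\rm min}(pe^{ia})\}$, proving Theorem~\ref{thm Dirac}.
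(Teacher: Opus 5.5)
Your argument is correct in outline, but it takes a genuinely different route from the paper.

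\textbf{What the paper does.} It uses only the \emph{output} of the Schr\"odinger model, namely the potential $Q_\theta$ with an unknown $\theta\in\mathfrak U(2)$, and reconstructs $p$ algebraically. It conjugates by a $\theta_1$, chosen from a linear dependence among $r,\operatorname{Re}z,\operatorname{Im}z$, so that the diagonal part $r_2$ vanishes. Lemma~1 then gives $z_2=p'e^{i\delta}$ or $\overline{p'}e^{i\delta}$. It integrates $z_2$ and fixes the initial phase from $|p|$. Finally it must choose between $\hat p$ and $\overline{\hat p}$ by asking which $D_{\rm min}$ is unitarily equivalent to $\dot D$. This last, non-constructive step is the only place non-exceptionality enters.

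\textbf{What you do.} You transport $\dot D$ itself by the model isometry $\Phi$ and rely on a rigidity lemma for unitaries commuting with $S_{\rm min}(Q)$. That lemma holds, and your sketch is the right one. Note that it is the controllability structure, not ``domains of powers'', that makes the nest intrinsic. If $WSW^*=S$, then $W$ also commutes with $S^*$ and with the Friedrichs extension. Hence $W\ker S^*=\ker S^*$, $W\Gamma_1=\Gamma_1W$ and $Wu^f(t)=u^{Wf}(t)$. So $W$ preserves $\mathscr U^T=L_2([0,T];\mathbb C^2)$, by finite speed plus exact Dirichlet controllability (shift by $\gamma$ first, since $D^2_{\rm min}(p)$ is only nonnegative). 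Commuting with all $\chi_{[0,T]}$ makes $W$ multiplication by a unitary $\omega(x)$. Testing $S(\omega y)=\omega Sy$ on $y\in C_0^\infty$ then gives $\omega\in H^2_{\rm loc}$, $\omega'=0$ and $[\omega,Q]=0$.

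\textbf{Trade-off.} Your route is constructive and avoids the case analysis, at the cost of this commutant lemma and of using the wave isometry rather than just the model operator.

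\textbf{Two loose ends.}

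First, the identification $D^2_{\rm min}(p)=S_{\rm min}(Q)$ is too quick. $\operatorname{Dom}D^2_{\rm min}(p)$ also requires $\mathscr Dy\in\mathscr H$, which is not automatic for $y\in\operatorname{Dom}S_{\rm max}(Q)$ with $y(0)=y'(0)=0$. Moreover, describing $\operatorname{Dom}S_{\rm min}(Q)$ by these two conditions needs Povzner--Wienholtz, hence semiboundedness. The paper gets both from $S_{\rm min}(Q)=S_{\rm max}(Q)^*\subset D^2(p)^*=D^2(p)$, where $D(p)$ is the self-adjoint extension with $y_1(0)+y_2(0)=0$.

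Second, your final paragraph rests on a dichotomy that does not exist. Every traceless Hermitian unitary $2\times2$ matrix is conjugate to both $\sigma_3$ and $-\sigma_3$. For any constant $\vartheta$ your $\kappa$ makes $\kappa^*\vartheta$ diagonal, so you always get $D_{\rm min}(pe^{ia})$. For example, $\vartheta=J=\begin{pmatrix}0&-1\\1&0\end{pmatrix}$ gives $D'=-D_{\rm min}(\overline p)$ and $\kappa=J$ returns $D_{\rm min}(p)$. Also, $D'=\Phi\dot D\Phi^*$ is unitarily equivalent to $\dot D$ by construction, so there is nothing to select. Delete that paragraph: non-exceptionality is never used in your argument.

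Indeed, granted your lemma, any unitary taking $D_{\rm min}(p)$ to $-D_{\rm min}(p)$ is a constant anti-diagonal $\omega$. This forces $p=e^{ic}q$ with $q$ real, hence $p\sim\overline p$. So your route yields the theorem even in the exceptional case.
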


This theorem is an immediate consequence of Threorem \ref{thm Dirac real} below. To prove the result we use the wave functional model of the minimal matrix Schr\"odinger operator based on triangular factorization of the control operator \cite{BelishevSimonov2024,BelishevSimonov2026JOT}. This model also yields self-modeling property of that class of operators.

\subsection*{Self-modeling property of minimal Schr\"odinger operators}

In this section we assume $n\in\bb N$ to be arbitrary, not necessarily equal to $2$.

\noindent$\bullet$\quad
Let $Q\in L_{\infty,\text{loc}}(\mathbb R_+;M_{\mathbb C}^n)$ and $Q(x)=Q^{*}(x)$, $x \geqslant 0$. The maximal Schr\"odinger operator $S_{\rm{max}}(Q)$ in $\m H=L_2(\bb R_+;\bb C^n)$ is defined by the differential expression
\begin{equation*}
(\m S y)(x):=-y^{\prime \prime}(x)+Q(x) y(x),
\end{equation*}
$S_{\rm{max}}(Q) y=\m S y$, on the domain
\begin{equation*}
\op{Dom} S_{\rm{max}}(Q)
=\left\{y \in\m H \mid y \in H_{\text{loc}}^{2}(\mathbb{R}_{+} ; \mathbb{C}^{n}), \m S y \in\m H\right\}.
\end{equation*}
Its adjoint is the minimal operator $S_{\rm{min}}(Q)=S_{\rm{max}}^*(Q)$ which acts by the same rule on a smaller domain. The minimal operator is symmetric. We suppose that $Q$ is such that the operator $S_{\rm{min}}(Q)$ is semibounded from below. In this case the matrix version \cite{ClarkGesztesy2003} of the Povzner--Wienholtz theorem says that $S_{\text {min}}(Q)$ has deficiency indices $(n,n)$, and its domain is
\begin{equation}\label{dom s min}
\op{Dom} S_{\rm{min}}(Q)
=\left\{y \in\op{Dom} S_{\rm{max}}(Q) \mid y(0)=y^{\prime}(0)=0\right\}
\end{equation}
(there are no boundary conditions at infinity).

For $\theta\in\f U(n)$ the operator $S_{\min}(Q_{\theta})$ with the potential 
$Q_{\theta}(x):=\theta Q(x) \theta^{*}$  coincides with the operator $(S_{\rm{min}}(Q))_{\theta}=\hat \theta S_{\rm min}(Q)\hat\theta^*$ and thus is unitarily equivalent to $S_{\rm min}(Q)$.

\begin{Definition}\label{def Schr eq}
Two minimal Schr\"odinger operators, $S_{\rm min}(Q_1)$ and $S_{\rm min}(Q_2)$, are called \textbf{shape equivalent}, if $S_{\rm min}(Q_1)=\hat\theta S_{\rm min}(Q_2)\hat\theta^*$ with some $\theta\in\f U(n)$.
\end{Definition}

Self-modeling property of Schr\"odinger operators follows from the results of \cite{BelishevSimonov2024,BelishevSimonov2026JOT} by construction of the wave functional model (its coordinate version) of a copy. Let operator $\dot{S}$ in $\dot{\m H}$ be a unitary copy of $S_{\text{min}}(Q)$, $\dot S=\Psi S_{\rm min}(Q)\Psi^*$ with some $\Psi\in\f U(\m H;\dot{\m H})$. By \cite[Theorem 5.1]{BelishevSimonov2026JOT} $S_{\rm min}(Q)$ satisfies Conditions 1--5 of \cite{BelishevSimonov2026JOT}. These conditions are formulated in a unitary-equivalent form, hence the copy $\dot S$ also satisfies them. Then by \cite[Theorem 4.5]{BelishevSimonov2026JOT} the construction of the wave functional model of $\dot S$ gives the minimal Schr\"odinger operator $\dot S_{\rm mod}=S_{\rm min}(\dot Q_{\rm mod})$ with the potential $\dot Q_{\rm mod}\in L_{\infty,\rm loc}(\bb R_+;\m B(\op{Ker}\dot S^*))$. One further step transforms this operator unitarily into $S_{\rm min}(Q_{\theta})$ with $\theta\in\f U(n)$. This solves the corresponding abstract inverse problem: to find $S_{\rm min}(Q)$ from its unitary copy $\dot S$ up to shape equivalence.  Let us outline this construction assuming that $\dot S$ and $S_{\text{min}}(Q)$ are positive-definite. 

\noindent\textbf{1.} 
The operator $\dot S$ determines the ordinary boundary triple \cite{DerkachMalamud2017} $(\dot{\mathscr{K}},\dot\Gamma_1,\dot\Gamma_2)$ with the boundary space $\dot{\m{K}}=\operatorname{Ker} \dot{S}^{*}$ ($\operatorname{dim} \dot{\m K}=n$) and the boundary operators $\dot{\Gamma}_{1,2}: \operatorname{Dom} \dot{S}^{*} \rightarrow \dot{\mathscr{K}}$, $\dot{\Gamma}_{1}=\mathbb{I}-\dot{S}_{\mathrm{F}}^{-1} \dot{S}^{*}$, $\dot{\Gamma}_{2}=\dot{\Pi} \dot{S}^{*}$, where $\dot{S}_{\mathrm{F}}$ is the Friedrichs extension of $\dot{S}$ and $\dot{\Pi}$ is the projection in $\dot{\mathscr{H}}$ onto $\dot{\mathscr{K}}$. The elements of this {\it Vishik triple} are related by the Green's formula
\begin{equation*}
\left(\dot{S}^{*} y, z\right)-\left(y, \dot{S}^{*} z\right)=\left(\dot{\Gamma}_{1} y, \dot{\Gamma}_{2} z\right)-\left(\dot{\Gamma}_{2} y, \dot{\Gamma}_{1} z\right), \quad y, z \in \operatorname{Dom} \dot{S}^{*}.
\end{equation*}

\noindent\textbf{2.} 
The triple $(\dot{\mathscr{K}},\dot\Gamma_1,\dot\Gamma_2)$ for every $T>0$ determines dynamical system $\dot{\alpha}_{S}^{T}$
\begin{align}
\label{DS1}&u''(t)+\dot{S}^{*} u(t)=0 && \text { in } \dot{\mathscr{H}},\quad 0<t<T,\\
\label{DS2}&u(0)=u'(0)=0 && \text { in } \dot{\mathscr{H}},\\
\label{DS3}&\dot{\Gamma}_{1} u(t)=f(t) && \text { in } \dot{\mathscr{K}},\quad 0 \leqslant t \leqslant T,
\end{align}
where $f \in \dot{\mathscr{F}}^{T}=L_{2}([0, T] ; \dot{\mathscr{K}})$ is {\it boundary control}, $u=u^{f}(t)$ is the solution. With the system one associates \textit{control operator} $\dot{W}^{T}: \dot{\mathscr{F}}^{T} \rightarrow \dot{\mathscr{H}}$, $\dot{W}^{T} f= u^{f}(T)$, which maps $\dot{\mathscr{F}}^{T}$ onto $\operatorname{Ran} \dot W^{T}=: \dot{\mathscr{U}}^{T}$ isomorphically, and \textit{connecting operator} $\dot{C}^{T}: \dot{\mathscr{F}}^{T} \rightarrow \dot{\mathscr{F}}^{T}$, $\dot{C}^{T}=\dot{W}^{T *} \dot{W}^{T}$, which is a positive-definite isomorphism of the form $\dot{C}^{T}=\mathbb{I}+\dot{A}^{T}$ with a compact $\dot{A}^{T}$.

\noindent\textbf{3.} 
The space $\dot{\mathscr{F}}^{T}$ contains the nest of subspaces $\dot{\mathfrak{f}}^{T}:=\left\{\dot{\mathscr{F}}^{T, s}\right\}_{0 \leqslant s \leqslant T}$, 
$$
\dot{\mathscr{F}}^{T, s}:=\Big\{f \in \dot{\mathscr{F}}^{T} \Big| \operatorname{supp} f \subset[T-s, T]\Big\},
$$
formed of delayed controls. The operator $\dot{C}^{T}$ admits the unique triangular factorization $\dot{C}^{T}=\dot{V}^{T *} \dot{V}^{T}$, $\dot{V}^{T} \dot{\mathscr{F}}^{T, s} \subset \dot{\mathscr{F}}^{T, s}$, with respect to the nest $\dot{\mathfrak{f}}^{T}$, where the factor has the form $\dot{V}^{T}=\mathbb{I}+\dot{B}^{T}$ with a compact $\dot{B}^{T}$ \cite{BelishevSimonov2024,GohbergKrein2004}.

\noindent\textbf{4.} 
Let $\dot{\m H}_{\text {mod }}^{T}$ be a copy of $L_{2}([0, T];\dot{\mathscr{K}})$ and let us denote the argument of functions from $\dot{\m H}_{\text {mod }}^{T}$ by $\tau$. Let $\dot Y^{T}$ change $\tau$ to $T-\tau$ and let $\dot Y_0^T$ change $t$ to $T-t$. Consider $\dot W_{\text {mod }}^{T}:=\dot Y^{T} \dot{V}^{T}\dot Y_0^T: \dot{\mathscr{F}}^{T} \rightarrow \dot{\m H}_{\text {mod }}^{T}$. The operator
$$
\dot S_{\rm mod}^{*\,T}:=\dot W_{\rm mod}^T\bigg(-\frac{d^2}{dt^2}\bigg|_{\{f\in H^2([0,T];\dot{\m K})\mid f(T)=f'(T)=0\}}\bigg)(\dot W_{\rm mod}^T)^{-1}
$$
in the model space $\dot{\m H}_{\text {mod }}^{T}$ turns out to have the form $\dot S_{\text {mod }}^{*\,T}=-\frac{d^2}{d\tau^2}+\dot{Q}_{\rm mod}^{T}(\tau)$ with an operator-valued function $\dot{Q}_{\rm mod}^{T}(\tau)=(\dot{Q}_{\rm mod}^T(\tau))^*$ which does not depend on $T$: $\dot{Q}_{\rm mod}^{T}=\dot{Q}_{\rm mod}|_{[0,T]}$ and $\dot{Q}_{\rm mod}\in L_{\infty,\text{loc}}(\mathbb R_+;\m B(\dot{\m K}))$.

\noindent\textbf{5.} 
Considering this construction for all $T>0$ one obtains to the model space $\dot{\m H}_{\rm mod }= L_{2}(\mathbb{R}_{+} ; \dot{\mathscr{K}})$ and the Schr\"odinger operator $\dot S_{\text {mod}}^{*}=-\frac{d^2}{d\tau^2}+\dot{Q}_{\rm mod}(\tau)$ on its maximal domain which is the wave functional model of $\dot S^*$ \cite[Theorem 4.5]{BelishevSimonov2026JOT}. Correspondingly, the minimal Schr\"odinger operator with the same potential $\dot S_{\rm mod}$ is the model of $\dot S$. 

\noindent\textbf{6.} 
Choosing an orthonormal base $\dot k_1,...,\dot k_n$ in $\dot{\mathscr{K}}$ and considering the unitary operators $\dot\mu\in\f U(\bb C^n;\dot K)$, $\dot\mu:v\mapsto\sum_{l=1}^nv_l\dot k_l$ and corresponding $\hat{\dot\mu}\in\f U(\m H;\dot{\m H}_{\rm mod})$, we make another unitary transformation
$$
\dot S_{\rm mod,c}:=\hat{\dot\mu}^*\dot S_{\rm mod}\hat{\dot\mu}=-\frac{d^2}{d\tau^2}+\dot Q_{\rm mod,c}(\tau),
$$
where $\dot Q_{\rm mod,c}(\tau)=\dot\mu^*\dot Q_{\rm mod}(\tau)\dot\mu$. This operator is called a {\it coordinate wave functional model} of $\dot S$. It depends on the choice of the base in $\dot{\m K}$.

\begin{Theorem}\label{thm follows from JOT}
    Let $\dot S$ be a unitary copy of a minimal Schr\"odinger operator $S_{\rm min}(Q)$ with a Hermitian matrix-valued potential $Q\in L_{\infty,\rm loc}(\bb R_+;M_{\bb C}^n)$ such that the operator is semibounded from below. Then coordinate wave functional models of $\dot S$ have the form $S_{\rm min}(Q_{\theta})$ with $\theta\in\f U(n)$.
\end{Theorem}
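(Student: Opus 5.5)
The plan is to compare two wave functional models. One is built from the copy $\dot S$. The other is built from the original operator $S_{\rm min}(Q)$ itself, for which the model can be computed directly. Write $\dot S=\Psi S_{\rm min}(Q)\Psi^*$ with $\Psi\in\f U(\m H;\dot{\m H})$. Every object in steps 1--5 is defined intrinsically in terms of the operator: the Friedrichs extension, $\operatorname{Ker}\dot S^*$, the projection $\dot\Pi$, the boundary operators, the dynamical system, $\dot W^T$, $\dot C^T$, the nest of delayed controls, and the triangular factorization. Hence all these objects transform covariantly under $\Psi$. In particular $\dot{\m K}=\Psi\m K$ with $\m K=\operatorname{Ker}S_{\rm min}(Q)^*$ and $\dot\Gamma_{1,2}=\Psi|_{\m K}\Gamma_{1,2}\Psi^*$. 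Writing $\psi:=\Psi|_{\m K}\in\f U(\m K;\dot{\m K})$, we get $\dot C^T=\hat\psi C^T\hat\psi^*$, where $\hat\psi$ acts pointwise on $L_2([0,T];\cdot)$. Since $\hat\psi$ preserves the nest, uniqueness of the triangular factorization gives $\dot V^T=\hat\psi V^T\hat\psi^*$. Therefore $\dot Q_{\rm mod}(\tau)=\psi Q_{\rm mod}(\tau)\psi^*$, and the coordinate model of $\dot S$ is the coordinate model of $S_{\rm min}(Q)$ with the basis $\psi^*\dot k_l$ of $\m K$. So it suffices to prove the statement for the operator $S_{\rm min}(Q)$ itself, with any orthonormal basis of $\m K$.

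For the original operator (assumed positive definite as in the outline; otherwise shift by a constant, which shifts $Q$ by a scalar and commutes with everything), we compute the model directly. A function $y\in\m K$ solves $-y''+Qy=0$ and is in $L_2$. Then $\Gamma_1 y=y-S_F^{-1}S^*y=y$, and the map $\m K\ni y\mapsto y'(0)$ or a similar trace identifies $\m K$ with $\bb C^n$ up to an invertible map. The key step is to show that the wave $u^f(T)$ of system \eqref{DS1}--\eqref{DS3} for $S_{\rm min}(Q)$ is the solution of the matrix wave equation $u_{tt}-u_{xx}+Qu=0$ with a boundary value $u(0,t)=\kappa f(t)$, for some fixed invertible $\kappa:\m K\to\bb C^n$. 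By finite speed of propagation and the Volterra integral representation
$$u^f(x,T)=\kappa f(T-x)+\int_x^T w(x,s)f(T-s)\,ds,\qquad x\le T,$$
we get $W^T=Y\,(\kappa+\text{lower-triangular Volterra part})\,Y_0$. Hence $W^T=U^T\cdot(\text{triangular})$, and uniqueness of the factorization of $C^T$ identifies $V^T$ with $\hat\theta_0^*\,Y W^T Y_0$ up to a constant unitary. This constant unitary comes from the polar part of $\kappa$: we need the diagonal of $V^T$ to be identity, so $\kappa$ must turn out to be unitary, or the normalization absorbs it. Consequently $W_{\rm mod}^T$ coincides with $\hat\theta_0^* W^T$, and $S^{*T}_{\rm mod}=-\partial_\tau^2+\theta_0^*Q\theta_0$. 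Composing with the basis map $\mu$ gives $S_{\rm min}(Q_\theta)$ with $\theta=\mu^*\theta_0^*$ unitary, after checking $\mu$ correctly.

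The main obstacle is this identification step. We must verify that $\Gamma_1$ acts as a genuinely unitary boundary trace (up to the right normalization), so that the leading term of $W^T$ equals an isometry times the identity rather than merely an invertible operator. Only then does uniqueness of the factorization with $V^T=\mathbb I+\text{compact}$ produce a unitary $\theta$ and not a general invertible conjugation. I would first try to check this via Green's formula and the isometry $\|u^f(T)\|$ vs.\ $\|f\|$ modulo compacts, i.e.\ $C^T=\mathbb I+A^T$ with compact $A^T$. This forces $\kappa^*\kappa=\mathbb I$ on $\m K$ for the leading term. Alternatively, one can invoke \cite[Theorem 4.5]{BelishevSimonov2026JOT} directly: it gives $\dot S_{\rm mod}=S_{\rm min}(\dot Q_{\rm mod})$ unitarily equivalent to $S_{\rm min}(Q)$, and the remaining task is to show that the intertwining unitary is pointwise. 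That follows because both operators share the wave-propagation nest structure, so the intertwiner commutes with multiplication by characteristic functions of $[0,s]$ and hence acts pointwise by a constant unitary.
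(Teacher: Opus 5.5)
Your reduction is the paper's own argument. The steps are: covariance of the whole construction under $\Psi$ with $\psi=\Psi|_{\m K}$; $\dot V^T=\hat\psi V^T\hat\psi^*$ by uniqueness of the factorization; the coordinate model of $\dot S$ equals that of $S_{\rm min}(Q)$ in the basis $\psi^*\dot k_l$; and the shift by $\gamma I$. The paper then simply imports from \cite{BelishevSimonov2026JOT} the formula $Q_{\rm mod}=\phi Q\phi^*$, $\phi=\lambda(\lambda^*\lambda)^{-1/2}$, where $\lambda:\bb C^n\to\m K$ is only an isomorphism.

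Your attempt to re-derive this goes wrong at the step you call the main obstacle. Since $S_F$ is the Dirichlet extension, $(S_F^{-1}S^*y)(0)=0$, so $\kappa k=k(0)$ (not $k'(0)$). This trace is in general not an isometry from $\m K$, with the norm of $\m H$, onto $\bb C^n$. For example, $n=1$, $Q\equiv c^2$ gives $\m K=\operatorname{span}\{\sqrt{2c}\,e^{-cx}\}$ and $|\kappa k|^2=2c\|k\|^2$. Your own Volterra representation shows that the leading part of $C^T$ is the constant operator $\kappa^*\kappa$, not $\mathbb I$. So the form $\mathbb I+$compact in steps 2--3 of the outline holds only after renormalizing by this constant, and your route ``$C^T=\mathbb I+A^T$ forces $\kappa^*\kappa=\mathbb I$'' would prove a false identity.

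Unitarity of $\kappa$ is not needed; the polar factor in the paper's $\phi$ takes care of it. Write $\kappa=\omega(\kappa^*\kappa)^{1/2}$ with $\omega\in\f U(\m K;\bb C^n)$. After the reflections, $W^T$ equals $\hat\omega$ times a triangular operator of the form $(\kappa^*\kappa)^{1/2}+$Volterra. Normalize the triangular factor by requiring a positive constant leading part. Uniqueness then holds: if $V_1^*V_1=V_2^*V_2$, then $V_2V_1^{-1}$ is a unitary preserving the nest together with its inverse. It is therefore multiplication by a unitary-valued function $m(t)$, and comparing the positive constant leading parts forces $m\equiv\mathbb I$. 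This gives $W^T_{\rm mod}=\hat\omega^*W^T$ and $Q_{\rm mod}=\omega^*Q\omega$. A direct computation shows $\omega^*=\lambda(\lambda^*\lambda)^{-1/2}$ for $\lambda=\kappa^{-1}$, which is exactly the paper's $\phi$. So $\theta$ is unitary because it is a polar factor, not because $\kappa$ is.

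Your alternative route (b) also has a hole. Commuting with all projections $\chi_{[0,s]}$ only makes the intertwiner multiplication by a measurable unitary-valued function $m(x)$, not by a constant. Constancy needs a further argument. For instance, intertwining $-\frac{d^2}{dx^2}+Q$ with $-\frac{d^2}{d\tau^2}+\dot Q_{\rm mod}$ on compactly supported smooth $y$ first gives regularity of $m$. It then forces the first-order term $2m'y'$ to vanish, so $m'=0$.
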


\begin{proof}
The result follows from the fact proved in \cite{BelishevSimonov2026JOT}: the coordinate wave functional model of a positive-definite $S_{\rm min}(Q)$ with a Hermitian matrix-valued $Q\in L_{\infty,\rm loc}(\bb R_+;M_{\bb C}^n)$ is a minimal Schr\"odinger operator which is shape equivalent to $S_{\rm min}(Q)$. One can trace the construction of the wave functional model for $S_{\rm min}(Q)$, which goes literally in the same way as for $\dot S$ (one should remove dots in notation). In particular, one has: 
\\
$\m K=\Psi^*\dot{\m K}$, 
\\
$\m H_{\rm mod}=\hat\Psi^*\dot{\m H_{\rm mod}}=L_2(\bb R_+;\m K)$ (where $(\hat\Psi y)(\tau)=\Psi(y(\tau))$, $\tau\geqslant0$), 
\\
$\m F^T=(\hat\Psi^T)^*\dot{\m F^T}=L_2([0,T];\m K)$ (where $(\hat\Psi^Ty)(\tau)=\Psi(y(\tau))$, $\tau\in[0,T]$), 
\\
$W^T=\Psi^*\dot W^T\hat\Psi^T$, 
\\
$C^T=(\hat\Psi^T)^*\dot C^T\hat\Psi^T$, 
$V^T=(\hat\Psi^T)^*\dot V^T\hat\Psi^T$, 
$W_{\rm mod}^T=(\hat\Psi^T)^*\dot W_{\rm mod}\hat\Psi^T$, 
\\
$S^{*\,T}_{\rm mod}=(\hat\Psi^T)^*\dot S_{\rm mod}^{*\,T}\hat\Psi^T$, 
$S^{*}_{\rm mod}=(\hat\Psi)^*\dot S_{\rm mod}^{*}\hat\Psi$,
\\
$Q_{\rm mod}(\tau)=\Psi^* \dot Q_{\rm mod}(\tau)\Psi$. 
\\
In \cite{BelishevSimonov2026JOT} it was proved that $Q_{\rm mod}(\tau)=\phi Q(\tau)\phi^*$, where $\phi=\lambda(\lambda^*\lambda)^{-\frac12}$, $\lambda:\bb C^n\to\m K$ is an isomorphism, thus $\phi\in\f U(\bb C^n;\m K)$. Then 
$$
\dot Q_{\rm mod}(\tau)=\Psi Q_{\rm mod}(\tau)\Psi^*=\dot\phi Q(\tau)\dot\phi^*,
$$
where $\dot\phi=\Psi\phi\in\f U(\bb C^n;\dot{\m K})$, and
$$
\dot S_{\rm mod,c}=\hat{\dot\mu}^*\dot S_{\rm mod}\hat{\dot\mu}=-\frac{d^2}{dx^2}+\dot\mu^*\dot\phi Q(x)\dot\phi^*\dot\mu=-\frac{d^2}{dx^2}+\theta Q(x)\theta^*=S_{\rm min}(Q_{\theta})
$$
with the matrix $\theta:=\dot\mu^*\dot\phi\in\f U(\bb C^n;\bb C^n)=\f U(n)$ corresponding to the choice of the base. Thus $\dot S_{\text {mod,c }}$ takes the form which is shape equivalent to the original $S_{\text{min}}(Q)$.

In the case $S_{\rm min}(Q)$ is only semibounded from below one has to add $\gamma I$ with $\gamma>0$ sufficiently large to obtain a positive-definite operator. Then one can perform the construction of the model described above and subtract $\gamma I$ from the result, which will give a model of $\dot S$ of the desired form.
\end{proof}

Since unitary equivalence of two operators means that they share the same copy, we immediately obtain the following consequence.

\begin{Corollary}
Let two Hermitian matrix-valued potentials $Q_1,Q_2\in L_{\infty,\rm loc}(\bb R_+;M_{\bb C}^n)$ be such that minimal Schr\"odinger operators $S_{\rm min}(Q_1)$ and $S_{\rm min}(Q_2)$ are semibounded from below. If $S_{\rm min}(Q_1)$ and $S_{\rm min}(Q_2)$ are unitarily equivalent, then they are shape equivalent.
\end{Corollary}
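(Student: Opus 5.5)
The Corollary follows from Theorem \ref{thm follows from JOT} applied to a single abstract copy shared by both operators.

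Let $U\in\f U(\m H)$ satisfy $S_{\rm min}(Q_2)=US_{\rm min}(Q_1)U^*$. Set $\dot{\m H}:=\m H$ and $\dot S:=S_{\rm min}(Q_1)$. Then $\dot S$ is a unitary copy of $S_{\rm min}(Q_1)$ with $\Psi=\mathbb I$, and it is also a unitary copy of $S_{\rm min}(Q_2)$ with $\Psi=U^*$.

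Now build one coordinate wave functional model $\dot S_{\rm mod,c}$ of $\dot S$. To do so, fix a single orthonormal base $\dot k_1,\dots,\dot k_n$ in $\dot{\m K}=\op{Ker}\dot S^*$. If the operators are only semibounded, also fix a single shift $\gamma$ large enough for both, which is possible because unitarily equivalent operators have the same lower bound. The construction in steps 1--6 uses only $\dot S$, the chosen base and $\gamma$, so it produces one well-defined operator $\dot S_{\rm mod,c}$ in $L_2(\bb R_+;\bb C^n)$.

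Apply Theorem \ref{thm follows from JOT} to $\dot S$, viewed first as a copy of $S_{\rm min}(Q_1)$ and then as a copy of $S_{\rm min}(Q_2)$. This gives $\theta_1,\theta_2\in\f U(n)$ with
$$
\dot S_{\rm mod,c}=S_{\rm min}\big((Q_1)_{\theta_1}\big)=S_{\rm min}\big((Q_2)_{\theta_2}\big).
$$
Two minimal Schr\"odinger operators that coincide have equal potentials almost everywhere. One way to see this is to compare $\m S y$ on the common domain and use the test functions in $H^2_{\rm comp}(\bb R_+;\bb C^n)$ supported away from $0$, which can take arbitrary vector values. Hence $\theta_1Q_1\theta_1^*=\theta_2Q_2\theta_2^*$, that is, $Q_1=\theta Q_2\theta^*$ with $\theta:=\theta_1^*\theta_2\in\f U(n)$.

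Finally, by the remark before Definition \ref{def Schr eq}, $S_{\rm min}(Q_{2,\theta})=\hat\theta S_{\rm min}(Q_2)\hat\theta^*$. Therefore $S_{\rm min}(Q_1)=\hat\theta S_{\rm min}(Q_2)\hat\theta^*$, which is exactly shape equivalence in the sense of Definition \ref{def Schr eq}.

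The one point needing care is that both applications of Theorem \ref{thm follows from JOT} refer to the same model. This holds because the model is canonical: it depends on the copy $\dot S$ and on the choice of base and shift, but not on which original operator or which $\Psi$ we have in mind. The proof of Theorem \ref{thm follows from JOT} makes this explicit, since $\Psi$ appears only in the analysis of the output, not in the construction itself.
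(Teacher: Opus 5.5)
Your proposal is correct and follows the paper's own argument. The paper obtains the Corollary in one line from Theorem \ref{thm follows from JOT}, noting that unitarily equivalent operators share a common copy; you do the same and also make explicit what the paper leaves implicit: the same base and shift $\gamma$ are used for the model, and $Q_1=\theta Q_2\theta^*$ with $\theta=\theta_1^*\theta_2$ follows from equality of the potentials.
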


Finally we can formulate the obtained results in terms of self-modeling property.

\begin{Theorem}\label{thm Schrodinger}
Semibounded minimal Schr\"odinger operators $S_{\rm min}(Q)$ with Hermitian matrix-valued potentials $Q\in L_{\infty,\rm loc}(\bb R_+;M_{\bb C}^n)$ are self-modeling.
\end{Theorem}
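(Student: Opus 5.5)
The proof amounts to unpacking the definition of self-modeling with $\f G=\f U(n)$ and then invoking Theorem~\ref{thm follows from JOT}. The definition asks for a canonical procedure that, given an arbitrary unitary copy $\dot S$ of $S_{\rm min}(Q)$, returns the shape equivalence class $\{S_{\rm min}(Q_\theta)\mid\theta\in\f U(n)\}$. The procedure will be the construction of the coordinate wave functional model $\dot S_{\rm mod,c}$ described in steps 1--6.

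That construction uses only objects defined intrinsically by $\dot S$: $\operatorname{Ker}\dot S^*$, the Friedrichs extension, the Vishik triple, the dynamical system $\dot\alpha^T_S$, the connecting operator, its triangular factorization and the resulting operator $\dot Q_{\rm mod}$. The only non-canonical inputs are two choices:
\begin{enumerate}
\item the shift $\gamma>0$ used when $\dot S$ is merely semibounded, which is removed at the end;
\item the orthonormal basis $\dot k_1,\dots,\dot k_n$ in $\dot{\m K}$.
\end{enumerate}
Semiboundedness and the lower bound are unitary invariants, so $\gamma$ can be chosen from $\dot S$ alone. Any two admissible values of $\gamma$ give the same $\dot Q_{\rm mod}$ after subtracting $\gamma$. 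This holds because the model of $\dot S+\gamma I$ is identified with $S_{\rm min}(Q_\theta+\gamma)$, and in any case the final identification below does not depend on $\gamma$.

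The key step is then a direct application of Theorem~\ref{thm follows from JOT}: every coordinate wave functional model equals $S_{\rm min}(Q_\theta)$ for some $\theta\in\f U(n)$. Hence it lies in the shape equivalence class of $S_{\rm min}(Q)$. Changing the basis replaces $\dot\mu$ by $\dot\mu\omega$ with $\omega\in\f U(n)$, and correspondingly $\theta$ by $\omega^*\theta$. So the set of all coordinate models obtained from $\dot S$ is exactly the full class $\{S_{\rm min}(Q_\theta)\}_{\theta\in\f U(n)}$. It also coincides with the domain requirement of Definition~\ref{def Schr eq}, since the domain of $S_{\rm min}(Q_\theta)$ is given by \eqref{dom s min} and equals $\hat\theta\op{Dom}S_{\rm min}(Q)$.

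Finally I would note that the class is independent of which copy is used. If $\dot S_1$ and $\dot S_2$ are two copies of $S_{\rm min}(Q)$, both produce the same class. Conversely, if $S_{\rm min}(Q')$ has a copy in common with $S_{\rm min}(Q)$, the Corollary shows the two operators are shape equivalent. This is exactly the statement that $S_{\rm min}(Q)$ is determined by any of its copies up to shape equivalence.

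The main obstacle is conceptual rather than computational: making precise that the model procedure depends only on $\dot S$ as an abstract operator. I would handle this by pointing to the fact, noted before Theorem~\ref{thm follows from JOT}, that Conditions 1--5 of \cite{BelishevSimonov2026JOT} are formulated in a unitarily invariant form, so every step commutes with unitary maps $\Psi$.
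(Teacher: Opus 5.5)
Your proposal is correct and follows the paper's route. The paper gives no separate argument for this theorem: it obtains it directly as a reformulation of Theorem~\ref{thm follows from JOT} (every coordinate wave model of a copy is some $S_{\rm min}(Q_\theta)$ with $\theta\in\mathfrak U(n)$) together with the Corollary on unitarily equivalent operators, exactly as you do.

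One minor inaccuracy: different shifts $\gamma$ do not literally give the same $\dot Q_{\rm mod}$, since the boundary space $\operatorname{Ker}(\dot S^*+\gamma I)$ itself depends on $\gamma$. As you note, however, only the resulting shape equivalence class matters, so this does not affect the argument.
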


\begin{Remark*}
In contrast to symmetric minimal Schr\"odinger operators, self-adjoint Schr\"odinger operators do not possess self-modeling property. For example, for scalar potentials shape equivalence means equality, however, there exist unitarily equivalent operators which have different potentials.
\end{Remark*}

In what follows we always assume $n=2$.


\subsection*{Self-modeling property of minimal Dirac operators}

In this section we prove Theorem \ref{thm Dirac}. It is more convenient to formulate it in a more explicit form.

\begin{Theorem}\label{thm Dirac real}
In the non-exceptional case a minimal Dirac operator $D_{\rm min}(p)$ with a potential $p\in W^1_{\infty,\rm{loc}}(\bb R_+;\bb C)$ can be recovered from its unitary copy $\dot D$ uniquely up to shape equivalence.
\end{Theorem}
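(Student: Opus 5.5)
The plan is to reduce the problem to the Schr\"odinger case by squaring. Given the copy $\dot D=\Psi D_{\rm min}(p)\Psi^*$, I will form $\dot D^2$ (the operator square on its natural domain). This is a unitary copy of $D_{\rm min}(p)^2$. A direct computation with $\sigma_3 P=-P\sigma_3$ gives
$$
\m D^2 y=-y''+Q_p y,\qquad Q_p=\begin{pmatrix}|p|^2 & ip'\\ -i\overline{p}' & |p|^2\end{pmatrix},
$$
which is Hermitian, lies in $L_{\infty,\rm loc}$ because $p\in W^1_{\infty,\rm loc}$, and is nonnegative since $D_{\rm min}(p)$ is symmetric. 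For the domain I have to show that $\Dom D_{\rm min}(p)^2=\{y\in\Dom S_{\rm max}(Q_p)\mid y(0)=y'(0)=0\}$. The condition $y(0)=0$ comes from $y\in\Dom D_{\rm min}$. The condition $(\m Dy)(0)=0$ together with $y(0)=0$ gives $y'(0)=0$. Conversely, membership in $\Dom S_{\rm max}$ together with $y\in\Dom D_{\rm max}$ is handled by a limit-point argument at infinity, as in the Povzner--Wienholtz setting. Hence $D^2_{\rm min}(p)=S_{\rm min}(Q_p)$ by (\ref{dom s min}).

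Next I apply Theorem \ref{thm follows from JOT} to $\dot D^2$. Its coordinate wave functional model is $S_{\rm min}(\theta Q_p\theta^*)$ with some $\theta\in\f U(2)$. Here $\theta$ is unknown, but $S_{\rm min}(\theta Q_p\theta^*)=\hat\theta D^2_{\rm min}(p)\hat\theta^*$. So the model realizes a unitary map from $\dot{\m H}$ onto $L_2(\bb R_+;\bb C^2)$ that carries $\dot D^2$ to the explicit Schr\"odinger operator $S_{\rm min}(\theta Q_p\theta^*)$. Transporting $\dot D$ by the same map gives an operator $D'$ in $L_2(\bb R_+;\bb C^2)$ with $D'^2=\hat\theta D_{\rm min}(p)^2\hat\theta^*$.

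The main obstacle is to recover $D_{\rm min}(p)$ itself, up to $\f G_D$, from the pair consisting of $D'$ and $S_{\rm min}(\theta Q_p\theta^*)$ known explicitly. I write $D'=\hat U\hat\theta D_{\rm min}(p)\hat\theta^*\hat U^*$, where $\hat U$ is some unitary commuting with $\hat\theta D^2_{\rm min}(p)\hat\theta^*$. The key step is to show that unitaries commuting with a minimal Schr\"odinger operator $S_{\rm min}(Q)$ are exactly constant matrices $\hat\omega$ with $\omega Q\omega^*=Q$. I would obtain this from the wave model's uniqueness: such a unitary preserves $\Ker S^*$, the Vishik triple, and hence the control operators and the nest, so it is the lift of a unitary on the boundary space. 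Then $D'=\hat\vartheta D_{\rm min}(p)\hat\vartheta^*$ with the constant unitary $\vartheta=\omega\theta$. Since $D'$ is a differential operator whose leading term is $i\vartheta\sigma_3\vartheta^*\,d/dx$, the matrix $\vartheta\sigma_3\vartheta^*$ is recovered from $D'$. Diagonalizing it by a constant unitary brings the leading term to $\pm i\sigma_3$, with the eigenvector order fixed by the sign convention. This leaves $\vartheta$ determined up to $e^{ib}\eta_a$, or up to the swap $\begin{pmatrix}0&-1\\1&0\end{pmatrix}$ that produces $-D_{\rm min}(\overline p)$.

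Finally I use the non-exceptional hypothesis to exclude the swap ambiguity. By the remark after Definition \ref{def Dirac eq}, $-D_{\rm min}(\overline p)$ is unitarily equivalent to $-D_{\rm min}(p)$. If both alternatives $D_{\rm min}(p_1)$ and $-D_{\rm min}(\overline{p_1})$ were compatible with the copy $\dot D$, then $D_{\rm min}(p)$ would be unitarily equivalent to $-D_{\rm min}(p)$, which is the exceptional case. Outside the exceptional case the sign of the leading coefficient is therefore canonically fixed. The remaining freedom is $e^{ib}\eta_a$, and by the computation before Definition \ref{def Dirac eq} it changes $p$ only by $e^{ia}$, i.e.\ it is shape equivalence. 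The step I expect to be most delicate is the commutant description above. If that fails, I would instead argue directly that any unitary $U$ with $U D_{\rm min}(p)U^*$ again a minimal Dirac operator must be a multiplication by a constant unitary, by comparing the wave models of the squares.
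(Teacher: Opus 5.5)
Your argument is sound in outline but follows a genuinely different route from the paper.

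\textbf{The paper's route.} It uses only the \emph{potential} $Q_\theta$ of the coordinate model. It parametrizes $\theta\in\f U(2)$ and finds $\theta_1$ making $\theta_1Q_\theta\theta_1^*-|p|^2I$ off-diagonal, via a linear dependence among $r,\op{Re}z,\op{Im}z$. Lemma~\ref{lemma r_2} then gives $z_2=e^{i\delta}p'$ or $e^{i\delta}\overline{p'}$, which it integrates under the constraint $|\hat p|=|p|$. At that level $p$ and $\overline p$ are indistinguishable, because $\theta Q_p\theta^*=Q_{\overline p}$ for $\theta=\begin{pmatrix}0&-1\\1&0\end{pmatrix}$. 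So the paper closes with a non-constructive choice: take whichever of $D_{\rm min}(\hat p)$, $D_{\rm min}(\overline{\hat p})$ is unitarily equivalent to $\dot D$. This is exactly where non-exceptionality is used.

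\textbf{Your route.} You keep the model map; this is legitimate, since the wave model is produced by an explicit unitary built from $\dot W^T$ and $\dot V^T$. You transport $\dot D$ itself and invoke a rigidity lemma: every unitary intertwining two minimal Schr\"odinger operators is a constant matrix. Hence the transported operator is $\hat\vartheta D_{\rm min}(p)\hat\vartheta^*$, and its first-order coefficient $i\vartheta\sigma_3\vartheta^*$ pins $\vartheta$ down modulo $e^{ib}\eta_a$. This is shorter and fully constructive, and it never meets the $p/\overline p$ ambiguity. The price is the rigidity lemma, which is not in the paper and carries the real weight of your proof.

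\textbf{Points that need repair.}

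First, your sketch of the rigidity lemma ends with ``the lift of a unitary on the boundary space''. That describes $U$ in the model space, not in the $x$-coordinates where you apply it. The missing step is as follows.
\begin{itemize}
\item $UW^T=W^T\hat\omega$ gives $U\m U^T=\m U^T$, and on the half-line $\m U^T=L_2((0,T);\bb C^2)$.
\item Hence $U$ commutes with multiplication by $\chi_{[0,T]}$ for all $T$, so it is multiplication by a measurable unitary matrix function $m$.
\item Testing the intertwining on $\varphi v$ and $x\varphi v$, with $\varphi\in C_0^\infty(0,\infty)$ and $v\in\bb C^2$, gives $m'=0$ and that $m$ commutes with the potential.
\end{itemize}

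Second, in the domain step you must show $\m Dy\in\m H$ for $y\in\Dom S_{\rm max}(Q_p)$ with $y(0)=y'(0)=0$. You also need semiboundedness of $S_{\rm min}(Q_p)$ before invoking \eqref{dom s min}. That does not follow from the symmetry of $D_{\rm min}(p)$ until the two operators are identified, and the matrix $Q_p$ itself need not be nonnegative. The paper gets both at once from $S_{\rm min}(Q_p)=S_{\rm max}(Q_p)^*\subset D^2(p)^*=D^2(p)$. Your ``limit-point argument'' is too vague as stated.

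Third, your last paragraph misreads the remark after Definition~\ref{def Dirac eq}. That remark gives $-D_{\rm min}(\overline{p})=\hat\theta D_{\rm min}(p)\hat\theta^*\cong D_{\rm min}(p)$ for every $p$. So ``both alternatives are compatible with $\dot D$'' always holds and implies nothing about the exceptional case. The paragraph is in any case unnecessary. The swapped operator has leading coefficient $-i\sigma_3$ and is not a minimal Dirac operator. Ordering the eigenvectors of $\vartheta\sigma_3\vartheta^*$ with $+1$ first already leaves only $e^{ib}\eta_a$.

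Your argument therefore does not use non-exceptionality at all. Indeed, the rigidity lemma shows the exceptional case reduces to $\overline p=e^{ia}p$, where $p$ and $\overline p$ are equivalent anyway.
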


\begin{proof}
The domain of the square of $D_{\text{min}}(p)$ can be described in the following way:
\begin{multline*}
\operatorname{Dom}D_{\text{min}}^2(p)
=\{y \in \text{Dom\,} D_{\text{min}}(p) \mid D_{\text{min}}(p)y\in \text{Dom\,}D_{\text{min}}(p)\}
\\
=\{y \in \m H \mid y,(i \sigma_{3} y^{\prime}+Py)\in H_{\text{loc}}^{1}(\mathbb{R}_{+};\mathbb C^{2}),
\\
(i \sigma_{3} y^{\prime}+Py),(-y^{\prime \prime}+(i \sigma_{3} P^{\prime}+ P^{2}) y) \in \m H,
\\
y(0)=0,i \sigma_{3} y^{\prime}(0)+P(0) y(0)=0\},
\end{multline*}
since 
\begin{multline*}
\m D^{2} y=i \sigma_{3}\left(i \sigma_{3} y^{\prime}+P y\right)^{\prime}+P\left(i \sigma_{3} y^{\prime}+P y\right)
\\
=-y^{\prime \prime}+i \underbrace{\left(\sigma_{3} P+P \sigma_{3}\right)}_{=0} y^{\prime}
+\left(i \sigma_{3} P^{\prime}+P^{2}\right) y=-y^{\prime \prime}+\left(i \sigma_{3} P^{\prime}+P^{2}\right) y.
\end{multline*}
These conditions are equivalent to $y \in H_{\text{loc}}^2(\mathbb{R}_{+};\mathbb C^{2}), \m D y, \m D^{2} y \in\m H,
y(0)=y^{\prime}(0)=0$. Then for the potential
\begin{equation}\label{pQ}
Q:=i\sigma_{3} P^{\prime}+P^{2}
=\begin{pmatrix}
|p|^{2} & i p^{\prime} \\
-i \overline{p^{\prime}} & |p|^{2}
\end{pmatrix}
\end{equation}
$D_{\text {min}}^{2}(p)$ is a restriction of the maximal Schr\"ondinger operator $S_{\text {max}}(Q)$ to the domain
\begin{equation*}
\operatorname{Dom} D_{\min}^{2}(p)
=\{y \in \operatorname{Dom} S_{\text{max}}(Q) \mid \m D y \in \m H, y(0)=y^{\prime}(0)=0\}.
\end{equation*}
Analogously one has
\begin{multline*}
\operatorname{Dom} D^{2}(p)=\{y \in \operatorname{Dom} S_{\text {max}}(Q) \mid \m D y \in\m H, y_{1}(0)+y_{2}(0)=0, 
\\
i y_{1}^{\prime}(0)-i y_{2}^{\prime}(0)+\overline{p(0)} y_{1}(0)+p(0) y_{2}(0)=0\}.
\end{multline*}
Since the operator $D^{2}(p)$ is self-adjoint, the inclusion $D^{2}(p) \subset S_{\max}(Q)$ means that $S_{\min}(Q) \subset D^{2}(p)$, and hence for every $y \in \operatorname{Dom} S_{\min}(Q)$ one has $\m D y \in\m H$, therefore $S_{\min}(Q) \subset D_{\min}^{2}(p)$. It follows that operator $S_{\text {min}}(Q)$, as a restriction of $D^{2}(p)$, is non-negative, and by the Povzner--Wienholtz theorem, \cite{ClarkGesztesy2003}, \eqref{dom s min} holds (there are no boundary conditions at infinity), therefore the domains of $S_{\min}(Q)$ and $D_{\min}^{2}(p)$ coincide, and then
\begin{equation*}
D_{\min}^{2}(p)=S_{\min}(Q).
\end{equation*}

Given a copy $\dot{D}$ of $D_{\min}(p)$ with $p\in W_{\infty,\rm loc}^1(\bb R_+;\bb C)$ one can take its square $\dot{S}=\dot{D}^{2}$ which is a copy of $S_{\rm min}(Q)$, a positive minimal Schrödinger operator to which Theorem \ref{thm Schrodinger} is applicable. The construction of a coordinate wave model of $\dot S$ gives the operator $S_{\min}(Q_{\theta})$ with some (unknown) $\theta \in\f U(2)$. Having $Q_{\theta}$, we recover a Dirac operator which is shape equivalent to $D_{\rm min}(p)$.

An arbitrary matrix $\theta \in\f U(2)$ has the form
\begin{multline*}
\theta=\theta_{\alpha\beta\gamma\varphi}=e^{i \alpha}
\begin{pmatrix}
\cos \varphi & \sin \varphi e^{i \gamma} \\
-\sin \varphi e^{i \beta} & \cos \varphi e^{i(\beta+\gamma)}
\end{pmatrix}
\\
=e^{i \alpha}
\left(\begin{array}{cc}
1 & 0 \\
0 & e^{i \beta}
\end{array}\right)\left(\begin{array}{cc}
\cos \varphi & \sin \varphi \\
-\sin \varphi & \cos \varphi
\end{array}\right)
\left(\begin{array}{cc}
1 & 0 \\
0 & e^{i \gamma}
\end{array}\right)
\end{multline*}
with some $\alpha,\beta,\gamma\in\mathbb R$ and unique $\varphi\in[0,\frac{\pi}2]$. This can be seen from the following. If $\theta\in\f U(2)$, then $|\det\theta|=1$ and matrix $\tilde\theta:=\theta e^{-i\frac{\arg\det\theta}{2}}$ has $\det\tilde\theta=1$ and is also unitary. Let
\begin{equation*}
\tilde\theta=:
\begin{pmatrix}
a & b \\ c & d
\end{pmatrix},
\quad ad-bc=1.
\end{equation*}
From the condition $\tilde\theta^*=\tilde\theta^{-1}$ one has
$
\begin{pmatrix}
\overline{a} & \overline{c} \\ \overline{b} & \overline{d}
\end{pmatrix}
=
\begin{pmatrix}
d & -b \\ -c & a
\end{pmatrix}, 
$
which is equivalent to $d=\overline{a}$, $c=-\overline{b}$. Then the equality $ad-bc=1$ means that $|a|^2+|b|^2=1$ and $|a|=\cos\varphi$, $|b|=\sin\varphi$ with some $\varphi\in[0,\frac{\pi}2]$. Then 
$\tilde\theta=\begin{pmatrix} \cos\varphi e^{i\alpha} & \sin\varphi e^{i\beta} \\ -\sin\varphi e^{-i\beta} & \cos\varphi e^{-i\alpha} \end{pmatrix}$
with $\alpha,\beta\in[0,2\pi)$. One can rewrite this as
$$
\tilde\theta=e^{i\alpha}
\begin{pmatrix}
1 & 0 \\ 0 & e^{-i(\beta+\alpha)}
\end{pmatrix}
\begin{pmatrix}
\cos\varphi & \sin\varphi \\ -\sin\varphi & \cos\varphi
\end{pmatrix}
\begin{pmatrix}
1 & 0 \\ 0 & e^{i(\beta-\alpha)}
\end{pmatrix}.
$$
from which the assertion about $\theta$ is follows.

Since
\begin{equation*}
Q(x)=|p(x)|^2I+\begin{pmatrix}
0 & ip'(x) \\ \overline{ip'(x)} & 0
\end{pmatrix},
\end{equation*}
we have
\begin{equation*}
Q_{\theta}(x)=\theta Q(x) \theta^{-1}
\\
=|p(x)|^2I+\begin{pmatrix}
r(x) & z(x) \\ \overline{z(x)} & -r(x)
\end{pmatrix},
\end{equation*}
where
\begin{eqnarray*}
& r(x):=(ip'(x)e^{-i\gamma}+\overline{ip'(x)} e^{i \gamma})\sin \varphi \cos \varphi \in\mathbb R,
\\
& z(x):=ip'(x)e^{-i(\gamma+\beta)} \cos^{2} \varphi 
-\overline{ip'(x)}e^{i(\gamma-\beta)} \sin^{2} \varphi\in\mathbb C.
\end{eqnarray*}
If one takes $\theta_1=\theta_{\alpha_1\beta_1\gamma_1\varphi_1}\in \f U(2)$, then $\theta_2:=\theta_1\theta\in \f U(2)$ and hence $\theta_2=\theta_{\alpha_2\beta_2\gamma_2\varphi_2}$ with some $\alpha_2,\beta_2,\gamma_2\in\mathbb R$, $\varphi_2\in[0,\frac{\pi}2]$. Then
\begin{equation*}
\theta_1Q_{\theta}(x)\theta_1^*=\theta_2Q(x)\theta_2^*=Q_{\theta_2}(x)=|p(x)|^2I+\begin{pmatrix}
r_2(x) & z_2(x) \\ \overline{z_2(x)} & -r_2(x)
\end{pmatrix}
\end{equation*}
with
\begin{eqnarray*}
& r_2(x)=(ip'(x)e^{-i\gamma_2}+\overline{ip'(x)} e^{i \gamma_2})\sin \varphi_2 \cos \varphi_2 \in\mathbb R,
\\
& z_2(x)=ip'(x)e^{-i(\gamma_2+\beta_2)} \cos^{2} \varphi_2 
-\overline{ip'(x)}e^{i(\gamma_2-\beta_2)} \sin^{2} \varphi_2\in\mathbb C.
\end{eqnarray*}
Let us find $\theta_1$ such that $r_2(x)\equiv 0$.

\begin{Lemma}\label{lemma r_2}
If $r_2(x)\equiv0$, then either $z_2(x)\equiv p'(x)e^{i\delta}$ or $z_2(x)=\overline{p'(x)}e^{i\delta}$ with some $\delta\in\bb R$.
\end{Lemma}

\begin{proof}
$r_2(x)\equiv0$ can happen in two cases: $\sin\varphi_2\cos\varphi_2=0$ or $ip'(x)e^{-i\gamma_2}+\overline{ip'(x)}e^{i\gamma_2}\equiv 0$. 
In the first case $\varphi_2=0$ or $\varphi_2=\frac{\pi}2$, and  one has $z_2(x)=ip'(x)e^{-i(\gamma_2+\beta_2)}$ or $z_2(x)=-\overline{ip'(x)}e^{i(\gamma_2-\beta_2)}$.
In the second case $\overline{p'(x)}\equiv p(x)e^{-2i\gamma_2}$, and $z_2(x)=ip'(x)e^{-i(\gamma_2+\beta_2)}\cos^2\varphi_2+ip'(x)e^{i(\gamma_2-\beta_2-2\gamma_2)}\sin^2\varphi_2=ip'(x)e^{-i(\gamma_2+\beta_2)}$.
\end{proof}

Calculation yields:
\begin{equation*}
\theta_1Q_{\theta}(x)\theta_1^*
=|p(x)|^2I+\theta_1\begin{pmatrix}
r(x) & z(x) \\ \overline{z(x)} & -r(x)
\end{pmatrix}\theta_1^*
=|p(x)|^2I+\begin{pmatrix}
r_2(x) & z_2(x) \\ \overline{z_2(x)} & -r_2(x)
\end{pmatrix},
\end{equation*}
and
\begin{multline*}
r_2(x):=\cos(2\varphi_1)r(x)+\sin(2\varphi_1)\cos\gamma_1\op{Re}z(x)+\sin(2\varphi_1)\sin\gamma_2\op{Im}z(x),
\end{multline*}
\begin{equation}\label{z2}
z_2(x):=-\sin(2\varphi_1)r(x)+\cos^2\varphi_1e^{-i\gamma_1}z(x)-\sin^2\varphi_1e^{i\gamma_1}\overline{z(x)}.
\end{equation}
The equality
\begin{equation*}
\cos(2\varphi_1)r(x)+\sin(2\varphi_1)\cos\gamma_1\op{Re}z(x)+\sin(2\varphi_1)\sin\gamma_2\op{Im}z(x)\equiv 0
\end{equation*}
means that three real-valued functions $r,\op{Re}z,\op{Im}z$, which are known from $Q_{\theta}$, are linearly dependent. One can find coefficients in their non-trivial zero linear combination and then obtain $z_2$ by the following procedure.
\\
\textbf{1.} 
If $r(x)\equiv\op{Re}z(x)\equiv\op{Im}z(x)\equiv 0$, then trivially $z_2(x)\equiv 0$ and nothing else is needed.
\\
\textbf{2.} 
If there exists $x_1\geqslant0$ such that $(r(x_1),\op{Re}(x_1),\op{Im}(x_1))\neq 0$, while for every $x\geqslant0$ one has $(r(x),\op{Re}(x),\op{Im}(x))\in\op{span}\{(r(x_1),\op{Re}(x_1),\op{Im}(x_1))\}$ (in $\mathbb R^3$), then one can take non-zero $(c_1,c_2,c_3)\perp(r(x_1),\op{Re}(x_1),\op{Im}(x_1))$ to have $c_1r(x)+c_2\op{Re}z(x)+c_3\op{Im}z(x)\equiv 0$.
\\
\textbf{3.} 
If there exist $x_1,x_2\geqslant0$ such that $(r(x_1),\op{Re}(x_1),\op{Im}(x_1))\neq 0$ and 
$$
(r(x_2),\op{Re}(x_2),\op{Im}(x_2))\notin\op{span}\{(r(x_1),\op{Re}(x_1),\op{Im}(x_1))\},
$$
then for every $x\geqslant0$ one should have 
$$
(r(x),\op{Re}(x),\op{Im}(x))\in\op{span}\{(r(x_1),\op{Re}(x_1),\op{Im}(x_1)),(r(x_2),\op{Re}(x_2),\op{Im}(x_2))\}
$$
(otherwise $r$, $\op{Re}z$, $\op{Im}z$ are linearly independent functions). Then one can take non-zero 
$$
(c_1,c_2,c_3)\perp\op{span}\{(r(x_1),\op{Re}(x_1),\op{Im}(x_1)),(r(x_2),\op{Re}(x_2),\op{Im}(x_2))\}
$$
and have $c_1r(x)+c_2\op{Re}z(x)+c_3\op{Im}z(x)\equiv 0$. 

As a result, the coefficients $c_1,c_2,c_3$ can always be found. Then the spherical angles of the point $(c_1,c_2,c_3)\in\mathbb R^3$ give the desired $\gamma_1\in[0,2\pi)$ and $2\varphi_1\in[0,\pi]$, which in turn give $z_2$ by \eqref{z2}.

Eventually one knows $|p(x)|$ and $z_2(x)$ such that $z_2(x)\equiv p'(x)e^{i\delta}$ or $z_2(x)\equiv\overline{p'(x)}e^{i\delta}$ with $\delta\in\bb R$, by Lemma \ref{lemma r_2}. Denote $\hat p(x):=p(x)e^{i\delta}$ in the first case and $\hat p(x):=\overline{p(x)}e^{i\delta}$ in the second. We need to find $\hat p$ such that
\begin{equation}\label{hat p}
\begin{array}{rcl}
|\hat p|&=&|p|,
\\
\hat p'&=& z_2,
\end{array}
\end{equation}
and to determine, whether $\hat p$ is equivalent to $p$ or to $\overline{p}$ (note that $p$ cannot be equivalent to $\overline{p}$, because we consider a non-exceptional case). Consider different possibilities.
\\
\textbf{1.} 
If $|p(0)|=0$, then $\hat p(x)=\int_0^xz_2(t)dt$.
\\
\textbf{2.} 
If $|p(0)|\neq 0$ and $z_2\equiv 0$, then $\hat p(x)=|p(0)|e^{ia}$ with some $a\in\mathbb R$.
\\
\textbf{3.} 
If $|p(0)|\neq 0$ and there exists $x_0\geqslant0$ such that $\int_0^{x_0}z_2(t)dt\neq 0$, then let $\varkappa_0:=\arg\hat p(0)$ and $\varkappa(x):=\arg\int_0^xz_2(t)dt$. One has:
\begin{multline*}
|p(x_0)|^2=|\hat p(x_0)|^2=\left||p(0)|e^{i\varkappa_0}+\int_0^{x_0}z_2(t)dt\right|^2
\\
=|p(0)|^2+\left|\int_0^{x_0}z_2(t)dt\right|^2+2|p(0)|\op{Re}\left(e^{-i\varkappa_0}\int_0^{x_0}z_2(t)dt\right),
\end{multline*}
then
\begin{multline*}
\cos\varkappa_0\op{Re}\int_0^{x_0}z_2(t)dt+\sin\varkappa_0\op{Im}\int_0^{x_0}z_2(t)dt
\\
=\frac{|p(x_0)|^2-|p(0)|^2-\left|\int_0^{x_0}z_2(t)dt\right|^2}{2|p(0)|}
\end{multline*}
and
\begin{equation*}
\cos(\varkappa_0-\varkappa(x_0))
=
\frac{|p(x_0)|^2-|p(0)|^2-\left|\int_0^{x_0}z_2(t)dt\right|^2}{2|p(0)|\left|\int_0^{x_0}z_2(t)dt\right|}.
\end{equation*}
\\
\textbf{3a.}
If there exist $x_1,x_2\geqslant0$ such that $\varkappa(x_1)\neq\varkappa(x_2)(\op{mod}\pi)$ and $\int_0^{x_1}z_2(t)dt$, $\int_0^{x_2}z_2(t)dt\neq0$, then $\varkappa_0$ can be determined from the system
\begin{equation*}
\cos(\varkappa_0-\varkappa(x_1))
=
\frac{|p(x_1)|^2-|p(0)|^2-\left|\int_0^{x_1}z_2(t)dt\right|^2}{2|p(0)|\left|\int_0^{x_1}z_2(t)dt\right|},
\end{equation*}
\begin{equation*}
\cos(\varkappa_0-\varkappa(x_2))
=
\frac{|p(x_2)|^2-|p(0)|^2-\left|\int_0^{x_2}z_2(t)dt\right|^2}{2|p(0)|\left|\int_0^{x_2}z_2(t)dt\right|}.
\end{equation*}
which is a non-degenerate linear system on $\cos\varkappa_0$, $\sin\varkappa_0$.
\\
\textbf{3b.}
If such a pair of points does not exist, then $\arg\int_0^{x}z_2(t)dt\equiv 0$ (since it is continuous, it should be constant and hence zero), which means that $z_2$ is real-valued and
$$
\varkappa_0\in
\left\{\pm\arccos\left(\frac{|p(x_0)|^2-|p(0)|^2-\left|\int_0^{x_0}z_2(t)dt\right|^2}{2|p(0)|\left|\int_0^{x_0}z_2(t)dt\right|}\right)\right\}.
$$ 
Both choices of the sign lead to solutions of the system \eqref{hat p}
$$
\hat p(x)=|p(0)|e^{i\varkappa_0}+\int_0^{x_0}z_2(t)dt,\quad\overline{\hat p(x)}=|p(0)|e^{-i\varkappa_0}+\int_0^{x_0}z_2(t)dt.
$$ 
Since $z_2$ is real-valued, changing the sign of $\varkappa_0$ results in complex conjugation of $\hat p$ which gives another solution. Thus we have found the unique solution of the system \eqref{hat p} in the cases \textbf{1}, \textbf{2}, \textbf{3a} and both its solutions, which are complex conjugate, in the case \textbf{3b}. This means that we have recovered the minimal Dirac operator $D_{\rm min}(p)$ up to shape equivalence, because one (and only one) of the operators $D_{\rm min}(\hat p)$ and $D_{\rm min}(\overline{\hat p})$ is unitarily equivalent to $\dot D$, and that operator should also be shape equivalent to $D_{\rm min}(p)$.
\end{proof}

We note an immediate consequence of the obtained result which is similar to the Schr\"odinger case.

\begin{Corollary}
Let $p_1,p_2\in W^1_{\infty,\rm loc}(\bb R_+;\bb C)$. If two minimal Dirac operators $D_{\rm min}(p_1)$ and $D_{\rm min}(p_2)$ are unitarily equivalent and the case is not exceptional, then they are shape equivalent.
\end{Corollary}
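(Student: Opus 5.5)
The Corollary follows from Theorem \ref{thm Dirac real}, using the fact that a unitary copy is shared by unitarily equivalent operators. Suppose $D_{\rm min}(p_1)$ and $D_{\rm min}(p_2)$ are unitarily equivalent, say $D_{\rm min}(p_2)=U D_{\rm min}(p_1)U^*$ with $U\in\f U(\m H)$. Take $\dot D:=D_{\rm min}(p_1)$ viewed as an operator in the abstract space $\dot{\m H}=\m H$. Then $\dot D$ is a unitary copy of both $D_{\rm min}(p_1)$ (via $\Psi=\mathbb I$) and $D_{\rm min}(p_2)$ (via $\Psi=U^*$).

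First I will check that the non-exceptional hypothesis holds for both operators. The exceptional case asks whether $D_{\rm min}(p)$ is unitarily equivalent to $-D_{\rm min}(p)$. This property is invariant under unitary equivalence: if $D_{\rm min}(p_2)\simeq -D_{\rm min}(p_2)$, conjugating by $U$ gives $D_{\rm min}(p_1)\simeq -D_{\rm min}(p_1)$, and conversely. So "the case is not exceptional" is unambiguous, and Theorem \ref{thm Dirac real} applies to each operator.

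Next I will note that the recovery procedure in the proof of Theorem \ref{thm Dirac real} uses only the copy $\dot D$. It consists of squaring $\dot D$, building the coordinate wave functional model by Theorem \ref{thm follows from JOT}, normalizing by $\theta_1$, and solving \eqref{hat p}. The only step that seems to consult the original operator is the final selection between $D_{\rm min}(\hat p)$ and $D_{\rm min}(\overline{\hat p})$ in case \textbf{3b}. That selection is made by asking which of the two is unitarily equivalent to $\dot D$, which is again a property of $\dot D$ alone.

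Applied to the single copy $\dot D$, the procedure therefore outputs one shape equivalence class. By Theorem \ref{thm Dirac real} applied with $p=p_1$, this class contains $D_{\rm min}(p_1)$. Applied with $p=p_2$, it contains $D_{\rm min}(p_2)$. Since shape equivalence is an equivalence relation ($\f G_D$ is a group), $D_{\rm min}(p_1)$ and $D_{\rm min}(p_2)$ are shape equivalent, that is, $p_1=p_2e^{ia}$.

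The only delicate point is this well-definedness of the output. One must see that the ambiguities in the construction lead to outputs in the same class:
\begin{itemize}
\item the choice of basis in $\dot{\m K}$ and hence of $\theta$;
\item the choice of $(c_1,c_2,c_3)$ and hence of $\theta_1$;
\item the two alternatives of Lemma \ref{lemma r_2}.
\end{itemize}
Each of these only replaces $\hat p$ by $\hat p e^{ia}$ or by its conjugate. Non-exceptionality guarantees that exactly one of the classes of $\hat p$ and $\overline{\hat p}$ is compatible with $\dot D$. Hence the output class is independent of the choices, as Theorem \ref{thm Dirac real} already asserts.
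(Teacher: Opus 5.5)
Your proposal is correct and follows the paper's own argument: unitarily equivalent operators share a copy $\dot D$, and Theorem \ref{thm Dirac real} recovers from $\dot D$ a single shape-equivalence class, which must then contain both $D_{\rm min}(p_1)$ and $D_{\rm min}(p_2)$. One small correction, which does not affect the logic: the final choice between $D_{\rm min}(\hat p)$ and $D_{\rm min}(\overline{\hat p})$ (by unitary equivalence to $\dot D$) is needed in every case, not only in case \textbf{3b}, because of the dichotomy in Lemma \ref{lemma r_2}.
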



\begin{thebibliography}{10}

\bibitem{Belishev1997}
M.~I. Belishev.
\newblock {Boundary control in reconstruction of manifolds and metrics (the BC
  method)}.
\newblock {\em Inverse Problems}, 13(5):1--45, 1997.

\bibitem{Belishev2007}
M.~I. Belishev.
\newblock {Recent progress in the boundary control method}.
\newblock {\em Inverse Problems}, 23(5):1--67, 2007.

\bibitem{Belishev2013}
M.~I. Belishev.
\newblock {A unitary invariant of a semi-bounded operator in reconstructin of
  manifolds}.
\newblock {\em J. Operator Theory}, 69(2):299--326, 2013.

\bibitem{Belishev2017}
M.~I. Belishev.
\newblock Boundary control and tomography of {Riemannian} manifolds (the
  {BC}-method).
\newblock {\em Russian Math. Surveys}, 72(4):581--644, 2017.

\bibitem{BelishevSimonov2026JOT}
M.~I. Belishev and S.~A. Simonov.
\newblock A model and characterization of a class of symmetric semibounded
  operators.
\newblock {\em To appear in J. Operator Theory}.

\bibitem{BelishevSimonov2026ShortNote}
M.~I. Belishev and S.~A. Simonov.
\newblock Self-modeling property of the one-dimensional {Dirac} operator.
\newblock {\em In preparation}.

\bibitem{BelishevSimonov2020}
M.~I. Belishev and S.~A. Simonov.
\newblock The wave model of a metric space with measure and an application.
\newblock {\em Sb. Math.}, 211(4):521--538, 2020.

\bibitem{BelishevSimonov2022}
M.~I. Belishev and S.~A. Simonov.
\newblock A canonical model of the one-dimensional dynamical {Dirac} system
  with boundary control.
\newblock {\em Evolution Equations and Control Theory}, 11(1):283--300, 2022.

\bibitem{BelishevSimonov2024}
M.~I. Belishev and S.~A. Simonov.
\newblock Triangular factorization and functional models of operators and
  systems.
\newblock {\em Algebra i Analiz}, 36(5):1--26, 2024.

\bibitem{ClarkGesztesy2003}
S.~Clark and F.~Gesztesy.
\newblock On {Povzner--Wienholtz}-type self-adjointness results for
  matrix-valued {Sturm--Liouville} operators.
\newblock {\em Proc. Roy. Soc. Edinburgh Sect. A}, 133(4):747--758, 2003.

\bibitem{DerkachMalamud2017}
V.~F. Derkach and M.~M. Malamud.
\newblock {\em Extension theory of symmetric operators and boudnary problems}.
\newblock Institute of Mathematics, Ukrainian National Academy of Sciences,
  Kiev, 2017.

\bibitem{GohbergKrein2004}
I.~Gohberg and M.~Krein.
\newblock {\em Theory and applications of Volterra operators in Hilbert space}.
\newblock American Mathematical Society, 2004.

\end{thebibliography}

\bigskip
\noindent{\bf Keywords:}\,\,\,triangular factorization of operators, nest theory, functional models.

\smallskip
\noindent{\bf MSC:}\,\,\,47A99,\,\, 47B25,\,\, 35R30.
\\
\noindent{\bf UDK:}\,\,\,517.

\end{document}